\documentclass[a4paper,reqno,11pt]{amsart}
\usepackage{amsmath,amssymb,tabularx,setspace,color}
\usepackage[utf8]{inputenc}
\usepackage{float}
\usepackage{hyperref}
\usepackage{graphicx}

\usepackage{multicol,multirow}
\newtheorem{theorem}{Theorem}
\newtheorem{corollary}{Corollary}
\makeatletter
\renewcommand\section{\@startsection {section}{1}{\z@}%
                                   {-3.5ex \@plus -1ex \@minus -.2ex}%
                                   {2.3ex \@plus.2ex}%
                                   {\normalfont\large\bfseries}}
\makeatother
\newtheorem{Definition}{Definition}

\begin{document}

\doublespace
\title[]{A New measure of income inequality}
\author[]
{   S\lowercase{udheesh} K K\lowercase{attumannil}$^{a}$ \lowercase{and} S\lowercase{aparya} S\lowercase{uresh}$^{b, \dag}$\\
$^{a}$I\lowercase{ndian} S\lowercase{tatistical} I\lowercase{nstitute},
  C\lowercase{hennai}, I\lowercase{ndia.}\\
  $^{b}$I\lowercase{ndian}  I\lowercase{nstitute of} M\lowercase{anagement},
  K\lowercase{ozhikode}, I\lowercase{ndia.}
}
\thanks {$^{\dag}${Corresponding author E-mail: \tt saparya@iimk.ac.in}.}
\begin{abstract} A new measure of income inequality that captures the heavy tail behavior of the income distribution is proposed. We discuss two different approaches to find the estimators of the proposed measure. We show that these estimators are consistent and have an asymptotically normal distribution. We also obtain a jackknife empirical likelihood (JEL) confidence interval of the income inequality measure.    A Monte Carlo simulation study is conducted to evaluate the finite sample properties of the estimators and JEL-based confidence inerval. Finally, we use our measure to study the income inequality of three states in India.

  \noindent {\sc Keyword;} Inequality measure; Gini mean difference; Gini index; U-statistics.
\end{abstract}
\maketitle

\section{Introduction}
A large number of indices of economic inequality, compatible with various axioms of fairness, have been proposed in the literature. Most of these measures are generalizations of the Gini mean difference, placing smaller or greater weights on various portions of the income distribution. Many researchers such as Yitzhaki and Schechtman (2005, 2013), Davidson (2009), Langel and Tille (2013), Carcea and Serfling (2015), and  Sudheesh et al. (2021, 2022)   have extensively studied the Gini index and related inequality measure. Langel and Tille (2013) reviewed the literature on the Gini index and showed that many studies had repeated similar findings without citing previous research. Further, Yitzhaki and Schechtman (2013) elaborate on using Gini methodology in statistical inference and related topics. Based on Gini auto correlation, Carcea and Serfling (2015) provided a theoretical foundation for analyzing time series with heavy tail innovations.  Sudheesh et al.  (2022) proposed non parametric estimators of Gini covariance and its variants.  Sreelakshmi et al. (2021) discussed the empirical likelihood inference of the extended Gini index. In this paper, motivated by the work of Yitzhaki and Schechtman (2013) and Carcea and Serfling (2015), we propose a new measure of income inequality that captures the heavy tail behavior of the income distribution.

As mentioned, most of the income inequality measures are generalizations of the Gini mean difference/Gini index, placing smaller or greater weights on various portions of the income distribution. Thus, we start by defining the Gini index. Let $X$ be a non-negative random variable having a distribution function $F$. Assume $\mu=E(X)<\infty$.  The Gini mean difference (GMD)  is defined as
\begin{eqnarray*}
    GMD= E|X_1-X_2|,
\end{eqnarray*}where $X_1$ and $X_2$ are two random variables having the same distribution function $F.$
Then Gini index is defined as
\begin{eqnarray*}
    G= \frac{E|X_1-X_2|}{2\mu}.
\end{eqnarray*}
The extended Gini index  of order $v$ is defined as
\begin{equation*}\label{eq2.1}
  EG_v(X)=E\big(X-\min(X_1, X_2,...,X_v)\big).
\end{equation*}
where $X_1,X_2,\cdots,X_v$ are $v$ independent random variable having distribution function $F$.
A dual concept of $EG_v(X)$ is given by
\begin{equation*}\label{eq2.1}
  EG_v(-X)=E\big(\max(X_1, X_2,...,X_v)-X\big).
\end{equation*}

In financial and insurance context, the value $EG_v(X)$ is called the risk-premium and
$EG_v(-X)$ the gain-premium of $X$ of order $v$, respectively. More details of $EG_v(X)$ and $EG_v(-X)$ can be found in Cardin et al. (2013).

Using these two quantities we can define
\begin{equation*}
 \text{ The starting minimum bid}=E(X)-EG_v(X)
\end{equation*} and
\begin{equation*}
 \text{The BIN price}=EG_v(X)-E(X).
\end{equation*}
The difference between `the BIN price' and `the starting minimum bid' is called, the width of the price spread of $X$, which is an important measure in financial auctioning. It may be noted that this difference effectively captures the spread in the data and hence can play a role in modeling inequality in a given income dataset. Accordingly, we define a new income inequality measure in the following section.

The rest of the paper is organised as follows. In Section 2, we propose a new measure of income inequality and study its property. In Section 3, we discuss two methods for finding the estimators of the proposed income inequality measure; one based on U-statistics and another based on the empirical distribution function. We also study the asymptotic properties of the proposed estimators. In Section 4, we develop a jackknife empirical likelihood (JEL) based confidence interval for the proposed measure. In Section 5, we conduct a Monte Carlo simulation study to evaluate the finite sample performance of the estimators of the proposed income inequality measure. We also study the finite sample behavior of the JEL-based confidence interval. In Section 6, we illustrate the application of the proposed measure using the household income data of three states in India. Some concluding remarks are given in Section 7.

\section{Proposed Measure}

In this section, we propose a new income inequality measure that captures the heavy-tailed behavior of the data.
Let $X$ be a non-negative random variable having absolutely continuous distribution function  $F(x)$.  For positive integer $v> 1$, let $X_1, X_2,...X_v$ be independent random variables having the same cumulative distribution  $F$. Based on the expected difference between $\max(X_1, X_2,...,X_v)$ and $\min(X_1, X_2,...,X_v)$,  we define a new measure as a generalization of Gini index.

\begin{Definition}
Let $X$ be a non-negative random variable having absolutely continuous distribution function  $F(x)$.  Let $X_1, X_2,...X_v$ be the iid copies from $F$. Assume $v> 1$ is a positive integer. We define a generalized  inequality measure(GIM) of order $v$ given as
\begin{equation}\label{eq2.1}
  GIM(v)=\frac{E\big(\max(X_1, X_2,...,X_v)-\min(X_1, X_2,...,X_v)\big)}{E\big(\max(X_1, X_2,...,X_v)+\min(X_1, X_2,...,X_v)\big)}.
\end{equation}
\end{Definition}


\noindent
The numerator of the $GIM(v)$ is clearly the width of the  spread of $X$, and the denominator makes the proposed measure  to have a value in the interval $[0,1]$. 
Note that the above definition captures the tail behavior of the probability distribution. The term $v$ refers to a parameter that adjusts the sensitivity of the index to different parts of the income distribution. Thus, GIM is a generalization of the standard Gini index, designed to allow more flexibility in assessing inequality by weighting different parts of the income distribution differently. As $v$ increases, the index places more weight on changes in the lower and upper part of the income distribution, making it more sensitive to inequality among the poorer segments of the population as well as the upper part of the distribution. Moreover, when $v=2$,  $ GIM(v)$ reduces to the Gini index.\\
 Next, we study the properties of $ GIM(v)$. The proofs of the trivial cases are not presented explicitly.

\noindent {\bf Property 1:} $0\leq GIM(v) \leq 1$. \\
{\bf Proof:} Result follows by noting that $$\max(X_1, X_2,...,X_v)\geq \min(X_1, X_2,...,X_v)\geq 0.$$

\noindent {\bf Property 2:} If all the individual income in the population are equal, then $GIM(v)=0$.

\noindent {\bf Property 3:} For $v=2$, $GIM(v)$ reduces to the Gini index.
\begin{proof}

 Let $X_1$ and $X_2$ be independent random variables having  distribution function $F$. Recall, the definition of the Gini index,
\begin{equation*}\label{eq2.2}
  G=\frac{E|X_1-X_2|}{2\mu}.
\end{equation*}

\noindent From Xu (2007) and Sudheesh et al. (2021) we have the following alternative expressions for the Gini index
\begin{equation}\label{eq2.3}
  G=\frac{E(\max(X_1,X_2)-X_1)}{\mu}.
\end{equation}
\begin{equation}\label{eq2.4}
  G=\frac{E(X_1-\min(X_1,X_2))}{\mu}.
\end{equation}
From (\ref{eq2.3}) and (\ref{eq2.4}), we obtain
\begin{equation}\label{eq2.5}
  G=\frac{E(\max(X_1,X_2)-\min(X_1,X_2))}{2\mu}.
\end{equation}
Note that the distribution function of the random variable $Z_1=\max(X_1,X_2)$ is given by
\begin{equation*}\label{eq2.7}
  F_{Z_1}(x)=F^2(x).
\end{equation*}
The survival function of the random variable $Z_2=\min(X_1,X_2)$ is given by
\begin{equation*}\label{eq2.7}
  \bar{F}_{Z_2}(x)=\bar{F}^2(x),
\end{equation*}
where $\bar{F}(x)=1-F(x)$ is the survival function of $X$ at $x$. For a non-negative random variable $X$, we have $E(X)=\int_{0}^{\infty} \bar {F}(x)dx$.  Hence
 \begin{eqnarray}\label{eq2.8}
    E\big(\max(X_1, X_2)+\min(X_1, X_2)\big)&=& \int_0^\infty (1-F^2(x))dx+\int_0^\infty \bar{F}^2(x)dx\nonumber\\
    &=&\int_0^\infty 2\bar{F}(x)dx=2\mu.
    \end{eqnarray}
Substituting (\ref{eq2.8}) in (\ref{eq2.5}) we have the expression (\ref{eq2.1}) for $v=2$. Hence the result.

\end{proof}

\section{Estimation and Asymptotic properties}
We discuss two different methods for finding the estimators of $GIM(v)$; one based on U-statistics and another based on the empirical distribution function. In the first case, since $GIM(v)$ is defined as the ratio of two quantities involving the expectation of a function of random variables, finding a U-statistics-based estimator is quite straightforward and studying the asymptotic properties of the estimator is simple. Xu (2007) gives a  detailed discussion of the estimation of different inequality measures based on U-statistics.

In the second method, the GIM is expressed as an integral of a quantity involving the underlying distribution function, which is then estimated by replacing the distribution function with the empirical distribution function. Studying the asymptotic properties of these estimators is not simple and requires several algebraic manipulations. Since the empirical distribution function is a consistent and sufficient estimator of the cumulative distribution function, this method has its relevance.

\subsection{Estimation based on U-statistics}

First, we find an estimator of $GIM (v)$ based on U-statistics. We estimate the numerator and denominator of $GIM(v)$ separately.

The numerator of $GIM(v)$ can be expressed as $ N=E(h_1(X_1,...,X_v))$ where $h_1(X_1,...,X_v)=max(X_1,...,X_v)-min(X_1,...,X_v)$ is a symmetric kernel.
Hence the estimator of $N$ based on U -statistics is given by
\begin{equation}\label{2.5}
U_1=\frac{1}{\binom{n}{v}}\sum_{P_{n,v}}h_1(X_1,...,X_v),
\end{equation}
where $P_{n,v}$ is the set of all permutation of $v$ from the $\{1, \cdots, n\}$. By definition, $U_1$  is an unbiased estimator of $N$.

Similarly, an unbiased estimator of the denominator $D$ of $GIM(v)$ is given by
\begin{equation}\label{2.5}
U_2=\frac{1}{\binom{n}{v}}\sum_{P_{n,v}}h_2(X_1,...,X_v)
\end{equation}
\noindent
where $h_2(X_1,...,X_v)=max(X_1,...,X_v)+min(X_1,...,X_v)$.
Hence, the estimator of  $GIM(v)$ is given by
\begin{equation}\label{2.5}
\widehat{GIM}(v)=\frac{U_1}{U_2}.
\end{equation}
Although $U_1$ and $U_2$ are unbiased estimators of the numerator and the denominator of $GIM(v)$, their ratio $\widehat{GIM}(v)$  is not an unbiased estimator of the $GIM(v)$. The finite sample performance of  $\widehat{GIM}(v)$ is evaluated using a Monte Carlo simulation study and the results are presented in Section 5.

As the proposed estimator is based on U-statistics, we use the asymptotic theory of U-statistics to discuss the limiting behavior of $\widehat{GIM}(v)$. Since U-statistics are consistent estimators, we have  the following result.
\begin{theorem} \label{the3.1}The  $U_1$  and $U_2$ are consistent estimators of $N$ and $D$, respectively.
\end{theorem}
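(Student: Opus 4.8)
The plan is to invoke the classical theory of $U$-statistics directly, since both $U_1$ and $U_2$ are honest one-sample $U$-statistics with bounded-expectation symmetric kernels. First I would observe that $N = E(h_1(X_1,\dots,X_v))$ is finite because $\max(X_1,\dots,X_v) \le X_1 + \cdots + X_v$ gives $E(h_1) \le E(\max) \le v\mu < \infty$ under the standing assumption $\mu < \infty$; the same bound handles the kernel $h_2(X_1,\dots,X_v) = \max(X_1,\dots,X_v) + \min(X_1,\dots,X_v)$ underlying $U_2$, so $D < \infty$ as well. This finiteness of the first moment of the kernels is exactly the hypothesis needed for Lehmann's (1951) weak law of large numbers for $U$-statistics.

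Next I would state the key fact: for a $U$-statistic $U_n$ with symmetric kernel $h$ of degree $v$ satisfying $E|h(X_1,\dots,X_v)| < \infty$, one has $U_n \to E(h(X_1,\dots,X_v))$ almost surely (and hence in probability), which is the content of Lehmann (1951) (alternatively Hoeffding 1961, or the Hewitt--Savage zero-one law argument). Applying this with $h = h_1$ gives $U_1 \xrightarrow{P} N$, and applying it with $h = h_2$ gives $U_2 \xrightarrow{P} D$. Since the statement of the theorem only claims consistency of $U_1$ and $U_2$ individually (not yet of the ratio $\widehat{GIM}(v)$), this essentially completes the argument; I would simply cite Lehmann (1951, Theorem on consistency of $U$-statistics) and record the moment check above as the only thing one must verify.

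The only mild obstacle is the verification that the relevant kernels are integrable, i.e. that $N$ and $D$ are finite — but this is immediate from $0 \le h_1 \le h_2 \le 2\max(X_1,\dots,X_v) \le 2(X_1 + \cdots + X_v)$ and linearity of expectation, so there is no real difficulty; it is a one-line domination. I would also remark in passing that, because $D = E(h_2) \ge E(2\min(X_1,\dots,X_v)) > 0$ whenever $F$ is not degenerate at $0$, the ratio is well-defined in the limit, which sets up the later continuous-mapping argument for $\widehat{GIM}(v) = U_1/U_2$; but that consistency-of-the-ratio claim is not part of the present theorem, so I would defer it. In short, the proof is: check $E|h_i| < \infty$, then quote the $U$-statistic law of large numbers.
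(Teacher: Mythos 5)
Your proposal is correct and follows essentially the same route as the paper, which simply attributes the consistency of $U_1$ and $U_2$ to Lehmann's (1951) law of large numbers for $U$-statistics without further argument. Your additional verification that the kernels are integrable (via the domination $0\le h_1\le h_2\le 2(X_1+\cdots+X_v)$) is a worthwhile check that the paper omits, but it does not change the approach.
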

\begin{corollary} The $\widehat{GIM}(v)$ is a consistent estimator of $GIM(v)$.
\end{corollary}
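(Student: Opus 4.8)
The statement to prove is that $\widehat{GIM}(v) = U_1/U_2$ is a consistent estimator of $GIM(v) = N/D$. The plan is to combine Theorem~\ref{the3.1}, which already gives $U_1 \xrightarrow{P} N$ and $U_2 \xrightarrow{P} D$, with the continuous mapping theorem applied to the map $(a,b) \mapsto a/b$. First I would note that the pair $(U_1, U_2)$ converges in probability to $(N, D)$ jointly; this is immediate since componentwise convergence in probability implies joint convergence in probability (the Euclidean norm of the difference is bounded by the sum of the absolute differences). Then, since $D = E\big(\max(X_1,\dots,X_v)+\min(X_1,\dots,X_v)\big) \geq 2E(X) > 0$ under the standing assumption $\mu = E(X) < \infty$ and $X$ non-negative and non-degenerate, the function $g(a,b) = a/b$ is continuous at the point $(N,D)$. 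Applying the continuous mapping theorem yields $U_1/U_2 \xrightarrow{P} N/D = GIM(v)$, which is the desired consistency.

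The main step that requires a word of care is verifying that the denominator stays bounded away from zero, i.e.\ that $D>0$, so that $g$ is continuous at $(N,D)$; this is where the assumption that $X$ is a non-negative random variable with positive mean enters, guaranteeing $D = 2\mu > 0$ when $v=2$ and more generally $D = E(\max) + E(\min) > 0$ for any $v \geq 1$ as long as $X$ is not degenerate at $0$. One should also implicitly use that $U_2 > 0$ almost surely for the ratio to be well-defined in finite samples, which holds because each kernel value $\max(X_1,\dots,X_v)+\min(X_1,\dots,X_v)$ is non-negative and strictly positive with positive probability.

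I do not anticipate any real obstacle here: once Theorem~\ref{the3.1} is in hand, the corollary is a routine application of Slutsky-type reasoning (ratio of consistent estimators with non-vanishing limiting denominator is consistent for the ratio). The only thing worth writing out explicitly is the joint-convergence-plus-continuous-mapping argument and the remark that $D \neq 0$.
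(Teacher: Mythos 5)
Your proposal is correct and follows essentially the same route as the paper: both deduce consistency of the ratio $U_1/U_2$ from Theorem~\ref{the3.1} via a Slutsky/continuous-mapping argument. If anything, your version is slightly more careful, since you verify $D>0$ explicitly, whereas the paper's one-line proof rewrites $\widehat{GIM}(v)/GIM(v)=(U_1/U_2)\cdot(D/N)$ and thereby implicitly also assumes $N\neq 0$.
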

\begin{proof}
As we can write
\begin{equation*}
  \frac{\widehat{GIM}(v)}{GIM(v)}=\frac{U_1}{U_2}.\frac{D}{N},
\end{equation*}the proof is an immediate consequence of Theorem \ref{the3.1}.
\end{proof}

\noindent Next we find the asymptotic distribution of $\widehat{GIM}(v)$. For this purpose, we find the asymptotic distribution of $U_1$
\begin{theorem} As $ n \rightarrow \infty $, $\sqrt{n}(U_1-N)$, convergence in distribution to a Gaussian random variable with mean zero and variance $v^2\sigma_{1}^{2}$, where $\sigma_1^2$ is  given by
\begin{eqnarray}
    \label{ustatsigma}
\sigma_{1}^{2}&=&Var\Big(X(F^{v-1}(X)-\bar F^{v-1}(X))\nonumber\\&&+(v-1)\int_{X}^{\infty} y F^{v-1}(y)dF(y)-(v-1)\int_{0}^{X} y \bar F^{v-1}(y)dF(y)\Big).
\end{eqnarray}
\end{theorem}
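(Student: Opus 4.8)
The plan is to invoke the central limit theorem for $U$-statistics (Hoeffding's theorem). Since $U_1$ is a $U$-statistic of degree $v$ with symmetric kernel $h_1(x_1,\dots,x_v)=\max(x_1,\dots,x_v)-\min(x_1,\dots,x_v)$, the standard theory gives $\sqrt{n}(U_1-N)\xrightarrow{d}N(0,v^2\sigma_1^2)$, where $\sigma_1^2=\operatorname{Var}(h_1^{(1)}(X_1))$ is the variance of the first projection
\begin{equation*}
h_1^{(1)}(x)=E\big(h_1(x,X_2,\dots,X_v)\big)-N.
\end{equation*}
So the entire content of the theorem reduces to computing this conditional expectation explicitly and identifying it with the random variable displayed in \eqref{ustatsigma}.

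First I would compute $E\big(\max(x,X_2,\dots,X_v)\big)$ and $E\big(\min(x,X_2,\dots,X_v)\big)$ for fixed $x$. For the maximum: conditional on one coordinate being $x$, the max is $x$ when all of $X_2,\dots,X_v\le x$, which happens with probability $F^{v-1}(x)$, and otherwise it is the maximum of those $X_i$ that exceed $x$. Writing $M=\max(X_2,\dots,X_v)$, which has CDF $F^{v-1}$, one gets
\begin{equation*}
E\big(\max(x,X_2,\dots,X_v)\big)=xF^{v-1}(x)+\int_x^\infty y\,dF^{v-1}(y)
 = xF^{v-1}(x)+(v-1)\int_x^\infty yF^{v-2}(y)\,dF(y).
\end{equation*}
Symmetrically, with $m=\min(X_2,\dots,X_v)$ having survival function $\bar F^{v-1}$,
\begin{equation*}
E\big(\min(x,X_2,\dots,X_v)\big)=x\bar F^{v-1}(x)+(v-1)\int_0^x y\bar F^{v-2}(y)\,dF(y).
\end{equation*}
Subtracting, the $x$-terms combine to $x\big(F^{v-1}(x)-\bar F^{v-1}(x)\big)$ and the integral terms give the remaining two pieces in \eqref{ustatsigma} (there is a minor bookkeeping point about whether the exponent inside the integrals is $v-1$ or $v-2$; I would reconcile this by noting $dF^{v-1}(y)=(v-1)F^{v-2}(y)\,dF(y)$, so the displayed form with $F^{v-1}$ under $dF$ should be read as the author's shorthand, or else an integration-by-parts identity is being used — I would state the computation carefully in whichever convention matches the paper). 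Since subtracting the constant $N$ does not change the variance, $\sigma_1^2=\operatorname{Var}(h_1^{(1)}(X))$ is exactly the variance of the displayed expression evaluated at $X$.

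The remaining step is to check the integrability hypothesis of Hoeffding's CLT, namely $E\big(h_1(X_1,\dots,X_v)^2\big)<\infty$, equivalently $E(X^2)<\infty$; I would add this as a standing assumption (the paper already assumes $\mu<\infty$, and a finite-variance assumption is the natural analogue here). I would also remark that $\sigma_1^2>0$ in the non-degenerate case so the limit is genuinely Gaussian rather than degenerate. The main obstacle is purely computational rather than conceptual: carrying out the two conditional-expectation integrals cleanly and matching them term-by-term with the slightly compressed expression in \eqref{ustatsigma}, taking care with the distinction between $dF$ and $dF^{v-1}$ and with the behavior of the boundary terms at $0$ and $\infty$. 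Once the projection $h_1^{(1)}$ is identified, the conclusion is immediate from the classical theory, so I would keep the write-up short and direct the reader to a standard reference (e.g. the treatment of $U$-statistics in Lehmann or Serfling, consistent with the citations already in the paper) for the CLT itself.
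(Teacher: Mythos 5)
Your proposal follows essentially the same route as the paper: invoke the CLT for $U$-statistics and compute the first projection $E\big(h_1(X_1,\dots,X_v)\mid X_1=x\big)$ by conditioning on $X_1$ and splitting the max and min according to whether $x$ dominates $\max(X_2,\dots,X_v)$ or is dominated by $\min(X_2,\dots,X_v)$, exactly as the paper does with its $Z_1,Z_2$ notation. Your careful bookkeeping with $dF^{v-1}(y)=(v-1)F^{v-2}(y)\,dF(y)$ in fact shows the exponent inside the integrals of \eqref{ustatsigma} should be $v-2$ rather than $v-1$ (check the case $v=2$ against the Gini projection), and your added standing assumption $E(X^2)<\infty$ is a hypothesis the paper leaves implicit; both are improvements rather than deviations.
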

\begin{proof}
    By CLT for the U-statistics, we have the asymptotic normality of $\sqrt{n}(U_1-N)$. The asymptotic variance (Lee, 2019)  is equal to $v^2\sigma_1^2$ where $\sigma_1^2$ is given by
    \begin{equation}
        \sigma_1^2=Var\left(E(h_1(X_1,\ldots,X_v)|X_1)\right).
    \end{equation}
Denote $Z_1=\max(X_2,\ldots,X_v)$  and $Z_2=\min(X_2,\ldots,X_v))$.  Consider
\begin{eqnarray*}
 &&\hskip-0.6inE(h_1(X_1,\ldots,X_v)|X_1=x)\\&=&E((\max(X_1,\ldots,X_v)-\min(X_1,\ldots,X_v))|X_1=x)\\
&=&E(xI(x>Z_1)+Z_1I(Z_1\geq x))-E(xI(x<Z_2)-Z_1I(Z_2\geq x))\\
&=&x(F^{v-1}(x)-\bar F^{v-1}(x))+E(Z_1I(Z_1\geq x))-E(Z_2I(Z_2\leq x).
\end{eqnarray*}
Hence, we have the variance expression specified in equation \eqref{ustatsigma}.

\end{proof}

\begin{theorem} As $ n \rightarrow \infty $, $\sqrt{n}(U_2-D)$, convergence in distribution to a Gaussian random variable with mean zero and variance $v^2\sigma_{2}^{2}$, where $\sigma_2^2$ is the asymptotic variance given by
\begin{eqnarray}
    \label{ustatsigma}
\sigma_{2}^{2}&=&Var\Big(X(F^{v-1}(X)+\bar F^{v-1}(X))\nonumber\\&&+(v-1)\int_{X}^{\infty} y F^{v-1}(y)dF(y)+(v-1)\int_{0}^{X} y \bar F^{v-1}(y)dF(y)\Big).
\end{eqnarray}
\end{theorem}

\noindent Since $U_2$ is a consistent estimator of $D$, using Slutky's theorem we have the following results.

\begin{corollary} The  distribution of $\sqrt{n}(\widehat{GIM}(v)-GIM(v))$, as $ n \rightarrow \infty $, is Gaussian with mean zero and variance $v^2\sigma^{2}$, where $\sigma^2=\sigma_1^2/D^2$.
\end{corollary}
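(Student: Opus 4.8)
The plan is to treat $\widehat{GIM}(v)=U_1/U_2$ as a smooth function of the pair of U-statistics $(U_1,U_2)$ and to combine the asymptotic normality of the numerator, established in the preceding theorem, with the consistency of the denominator, established in Theorem \ref{the3.1}. I would begin from the elementary identity
\begin{equation*}
\sqrt{n}\big(\widehat{GIM}(v)-GIM(v)\big)=\sqrt{n}\left(\frac{U_1}{U_2}-\frac{N}{D}\right)=\frac{\sqrt{n}\,(U_1-N)}{U_2}-\frac{N}{U_2\,D}\,\sqrt{n}\,(U_2-D),
\end{equation*}
which reduces the problem to controlling the two terms on the right-hand side.

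For the first term, the preceding theorem gives that $\sqrt{n}(U_1-N)$ converges in distribution to a centered Gaussian with variance $v^{2}\sigma_{1}^{2}$, while Theorem \ref{the3.1} gives that $U_2\to D$ in probability, with $D>0$. Hence $1/U_2\to 1/D$ in probability, and an application of Slutsky's theorem shows that $\sqrt{n}(U_1-N)/U_2$ converges in distribution to a centered Gaussian with variance $v^{2}\sigma_{1}^{2}/D^{2}=v^{2}\sigma^{2}$. This is the term that produces the asserted limit law.

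The delicate point, and the step I expect to be the main obstacle, is to show that the second term $\frac{N}{U_2 D}\sqrt{n}(U_2-D)$ does not affect the limiting law. Since $N/(U_2 D)\to N/D^{2}$ in probability and, by the central limit theorem for U-statistics applied to the kernel $\max(X_1,\dots,X_v)+\min(X_1,\dots,X_v)$, the quantity $\sqrt{n}(U_2-D)$ is itself asymptotically normal, this term is in general $O_{p}(1)$ rather than $o_{p}(1)$. A fully rigorous treatment would therefore pass through the bivariate central limit theorem for the vector $(U_1,U_2)$, whose limiting covariance matrix has entries $\mathrm{Cov}\big(g_{1}(X_1),g_{j}(X_1)\big)$ for $j=1,2$, with $g_{1},g_{2}$ the first projections of the two kernels, and then apply the multivariate delta method to $(a,b)\mapsto a/b$ at $(N,D)$, where the gradient is $(1/D,\,-N/D^{2})$. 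The computational heart of this is to identify and simplify the denominator projection $g_{2}(x)=E\big(\max(x,Z_1)\big)+E\big(\min(x,Z_2)\big)$ with $Z_1=\max(X_2,\dots,X_v)$ and $Z_2=\min(X_2,\dots,X_v)$; once that is done, either the denominator's fluctuation is seen to be asymptotically negligible, so that Slutsky's theorem yields exactly the stated variance $v^{2}\sigma^{2}=v^{2}\sigma_{1}^{2}/D^{2}$, or the limiting variance is the full delta-method expression, which then also involves $\mathrm{Var}(g_{2}(X_1))$ and $\mathrm{Cov}(g_{1}(X_1),g_{2}(X_1))$.
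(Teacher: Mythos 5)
The paper gives no real proof of this corollary: it simply prefaces the statement with ``Using Slutsky's theorem we have the following results,'' i.e.\ it treats the denominator $U_2$ as if only its limit in probability mattered. Your decomposition
\[
\sqrt{n}\Bigl(\tfrac{U_1}{U_2}-\tfrac{N}{D}\Bigr)=\frac{\sqrt{n}(U_1-N)}{U_2}-\frac{N}{U_2D}\,\sqrt{n}(U_2-D)
\]
is exactly the right way to test that argument, and your diagnosis of the second term is correct: $\sqrt{n}(U_2-D)$ is itself asymptotically normal with a non-degenerate limit, so that term is $O_p(1)$, not $o_p(1)$, and Slutsky's theorem alone cannot deliver the stated variance. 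You can see this most transparently for $v=2$, where $\max(X_1,X_2)+\min(X_1,X_2)=X_1+X_2$, hence $U_2=2\bar{X}_n$ and $\sqrt{n}(U_2-D)\Rightarrow N(0,4\,\mathrm{Var}(X))$; moreover this fluctuation is correlated with that of $U_1$. So the only defensible route is the one you sketch: the joint CLT for the vector of U-statistics $(U_1,U_2)$ followed by the delta method for $(a,b)\mapsto a/b$ at $(N,D)$, giving limiting variance
\[
v^2\Bigl(\frac{\zeta_{11}}{D^2}-\frac{2N\zeta_{12}}{D^3}+\frac{N^2\zeta_{22}}{D^4}\Bigr),
\]
with $\zeta_{jk}=\mathrm{Cov}(g_j(X_1),g_k(X_1))$ the projection covariances of the two kernels. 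The variance claimed in the corollary, $v^2\sigma_1^2/D^2$, is only the first of these three terms.

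The one weakness of your proposal is that you leave the conclusion open (``either \dots or \dots''). You should commit: the denominator's fluctuation is \emph{not} asymptotically negligible (the $v=2$ computation above settles this), so the correct limit has the full delta-method variance, and the corollary as printed is in error unless one happens to have $-2N\zeta_{12}/D+N^2\zeta_{22}/D^2=0$, which does not hold in general. In short, your approach is not merely a more rigorous version of the paper's argument --- it exposes that the paper's one-line Slutsky justification skips precisely the step that changes the answer.
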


Using Corollary 2, we obtain a normal-based confidence interval for $GIM(v)$. Let $\widehat{\sigma}$ be a consistent estimator of $\sigma$ and  $Z_{\alpha}$ denote the upper $\alpha$-th percentile point of a standard normal distribution. A $100(1-\alpha)\%$ confidence interval for  $GIM(v)$ is given by
  \begin{equation*}
   \left(\widehat{GIM}(v)-Z_{\alpha/2}\frac{\widehat{\sigma}}{\sqrt{n}},\,\,\widehat{GIM}(v)+Z_{\alpha/2}\frac{\widehat{\sigma}}{\sqrt{n}}\right).\end{equation*}

\subsection{Estimation based on empirical distribution function}

Let $(X,Y)$ be a bivariate random vector with joint distribution function $F_{XY}$. Also let $F_X$ and $F_Y$ be the respective marginal distribution functions. We assume that the first moment of these random variables is finite.
Suppose $(X_1, Y_1)$, ...,$(X_n, Y_n)$ are independent and identically distributed as the bivariate random vector $(X, Y)$. Let the $Y$ variate paired with the $i$-th ordered $X$
variate $X_{i:n}$ be denoted by $Y_{[i:n]}$ is known as the concomitant of $i$-th order statistics.
Under the above formulation, consider the statistics of the form
\begin{equation}\label{2.7}
T(F_n)=\int_{0}^{\infty}\int_{0}^{\infty}J(F_n(x))h(x,y)dF_n=\frac{1}{n}\sum_{i=1}^{n}J(\frac{i}{n})h(X_{i:n},Y_{[i:n]}),
\end{equation}
where $J$ is a bounded smooth function, $h(x, y)$ is a real valued function of $(x, y)$
and $F_ n$ is the empirical distribution function of $F$ given by
\begin{equation*}
    F_n(x)=\frac{1}{n}\sum_{i=1}^{n}I(X_i<x),
\end{equation*}where $I$ denote the indicator function.
It can be easily verified that, $T(F_n)$ is a plug-in estimator of the integral of the form
\begin{equation}\label{2.71}
T(F)=\int_{0}^{\infty}\int_{0}^{\infty}J(F_X(x))h(x,y)dF_{XY}(x,y).
\end{equation}
Some of the properties of the estimator $ T(F_n)$ are first discussed by Yang (1981) in the context of non-parametric estimation of a regression function. The asymptotic properties of the estimators $T(F_n)$ have been studied by Yang (1981)  and Sandstrom (1987).

In fact the form of the estimator (\ref{2.7}) gives a unique way to find the estimators of the quantities of interest. Accordingly, for finding the estimator of $GIM(v)$ our task is reduced to rewriting the expression in equation (\ref{eq2.1}) to the form (\ref{2.71}).

Using the density functions of $\max(X_1,\ldots,X_v)$ and $\min(X_1,\ldots,X_v)$, we can rewrite the numerator of equation (\ref{eq2.1}) as
\begin{eqnarray}\label{2.8}
N &=& v\int_{0}^{\infty}x(F^{v-1}_X(x)-\bar{F}^{v-1}_X(x))dF_X(x).
\end{eqnarray}
By taking $J=F^{v-1}_X(x))-\bar{F}^{v-1}_X(x)$ and $h(x,y)=vx$, the equation (\ref{2.8}) coincides with (\ref{2.71}). An estimator of $N$ is given by
\begin{equation}\label{2.9}
\widehat{N}=\frac{v}{n^v}\sum_{i=1}^{n}(i^{v-1}-(n-i)^{v-1})X_{i:n}.
\end{equation}
Similarly, we can estimate the denominator $D$ of $GIM(v)$ is given by
\begin{equation}\label{2.9}
\widehat{D}=\frac{v}{n^v}\sum_{i=1}^{n}(i^{v-1}+(n-i)^{v-1})X_{i:n}.
\end{equation}
Hence, the estimator of $GIM(v)$ is given by
\begin{equation}\label{2.9}
\widetilde{GIM}(v)=\frac{\sum_{i=1}^{n}(i^{v-1}-(n-i)^{v-1})X_{i:n}}{\sum_{i=1}^{n}(i^{v-1}+(n-i)^{v-1})X_{i:n}}.
\end{equation}

\noindent Next, we find the asymptotic distribution of $\widetilde{GIM}(v)$.

 Under quite mild conditions Yang
(1981) established the asymptotic normality of $\sqrt{n}(T(F_n) -E(T(F_n)))$ using Hajek's projection lemma. Using a stochastic Gateaux differential, Sandstrom (1987) proved the asymptotic normality of $\sqrt{n}(T(F_n) -T(F))$.

Next, we state a general result due to Sandstrom (1987) and apply same to obtain the asymptotic distribution of the estimators derived above.

\noindent  Let
\begin{equation}\label{2.15}
\alpha_h(x)=E(h(X,Y)|X=x)
\end{equation}and
\begin{equation}\label{2.16}
  \tau^2_h(x)=V(h(X,Y)|X=x).
\end{equation}
Also let
\begin{equation}\label{2.17}
  \sigma^2=\sigma_{11}^2+\sigma_{22}^2,
\end{equation}
 where
\begin{eqnarray}\label{2.18}
\sigma_{11}^2&=&\int_{0}^{\infty}\int_{0}^{\infty}\big[min\{F_X(x),F_X(z)\}-F_X(X)F_X(z)\big]\nonumber \\&\quad&\quad\quad\quad\quad\quad\quad\quad\quad J(F_X(x))J(F_X(z))d\alpha_h(x)d\alpha_h(z)
 \end{eqnarray}
 and
  \begin{equation}\label{2.19}
  \sigma_{22}^2=\int_{0}^{\infty}J^2(F_X(x))\tau_h^2(x)dF_X.
 \end{equation}


\begin{theorem}\label{th4.1}Assume $\alpha_h(x)$ is right continuous and that $J$ is bounded in $[0,1]$ and differentiable. Also assume that $\alpha_h(x)$ and $\tau^2_h(x)$ are finite.
Suppose, $\sigma^{2}$ is as defined in (\ref{2.17}). Then as $n\rightarrow \infty$,  $\sqrt{n}((T(F_n)- T(F))/\sigma$ converges in distribution to a standard normal random variable.
\end{theorem}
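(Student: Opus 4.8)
The plan is to follow the stochastic Gateaux-differential approach of Sandstrom (1987): view $T$ in \eqref{2.71} as a functional of the underlying bivariate distribution, evaluate it at the empirical measure (which produces the order-statistic/concomitant form \eqref{2.7}), and expand $T(F_n)-T(F)$ about $F$. This yields a decomposition $T(F_n)-T(F)=L_n+R_n$, where $L_n$ is linear in $F_n-F$ and therefore a normalized sum of (conditionally) independent contributions, while $R_n$ is a second-order remainder. Asymptotic normality then follows from a central limit theorem for $L_n$, provided one shows $\sqrt n\,R_n=o_p(1)$ and that the bias satisfies $\sqrt n\,(E\,T(F_n)-T(F))\to 0$; differentiability of $J$ is what makes both of these go through, which is also why Yang (1981) could only reach the centering $E\,T(F_n)$ with the weaker smoothness he assumed.

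The first step is to separate the two sources of randomness in $L_n$. Writing $h(X_{i:n},Y_{[i:n]})=\alpha_h(X_{i:n})+\bigl(h(X_{i:n},Y_{[i:n]})-\alpha_h(X_{i:n})\bigr)$ splits $T(F_n)$ into the $L$-statistic $S_n=\frac1n\sum_{i=1}^n J(i/n)\,\alpha_h(X_{i:n})$ in the $X$-sample and the concomitant-fluctuation term $C_n=\frac1n\sum_{i=1}^n J(i/n)\bigl(h(X_{i:n},Y_{[i:n]})-\alpha_h(X_{i:n})\bigr)$. The structural fact driving the analysis is that, conditionally on $X_1,\dots,X_n$, the variables $h(X_{i:n},Y_{[i:n]})$ are independent with conditional mean $\alpha_h(X_{i:n})$ and conditional variance $\tau_h^2(X_{i:n})$; hence $C_n$ is conditionally a sum of independent mean-zero terms with conditional variance $\frac1{n^2}\sum_{i=1}^n J^2(i/n)\,\tau_h^2(X_{i:n})=\frac1n\int_0^\infty J^2(F_n(x))\tau_h^2(x)\,dF_n(x)$, which by Glivenko--Cantelli and boundedness of $J$ converges, after multiplying by $n$, to $\sigma_{22}^2$ of \eqref{2.19}. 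For $S_n$ I would use the standard influence-function representation of an $L$-statistic: linearizing $S_n-\int_0^\infty J(F_X(x))\alpha_h(x)\,dF_X(x)$ against the empirical process of the $X$'s and integrating by parts produces the influence function $\psi(X_i)=-\int_0^\infty\bigl(I(X_i\le x)-F_X(x)\bigr)J(F_X(x))\,d\alpha_h(x)$, whose second moment is exactly the double integral $\sigma_{11}^2$ in \eqref{2.18}.

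Next I would establish joint asymptotic normality and identify the variance. For $S_n$ I would invoke Hajek's projection lemma (the route of Yang, 1981) or, equivalently, the Lindeberg--Feller theorem applied to the linearized sum; for $C_n$ I would apply a conditional Lindeberg central limit theorem, verifying the conditional Lindeberg condition from finiteness of $\alpha_h$ and $\tau_h^2$ together with boundedness of $J$. The cross-covariance of $S_n$ and $C_n$ is exactly zero, since $C_n$ is conditionally centered given $X_1,\dots,X_n$ while $S_n$ is a function of those variables alone; therefore the limiting variance is $\sigma^2=\sigma_{11}^2+\sigma_{22}^2$ as in \eqref{2.17}. Assembling the pieces gives that $\sqrt n\,(T(F_n)-T(F))$ converges in distribution to $N(0,\sigma^2)$, and dividing by $\sigma$ yields the stated convergence to a standard normal random variable.

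The hard part is the remainder control, namely showing $\sqrt n\,R_n=o_p(1)$ and that $E\,T(F_n)$ may be replaced by $T(F)$. This is exactly where the hypotheses are used: boundedness and differentiability of $J$ on $[0,1]$, right-continuity of $\alpha_h$, and finiteness of $\alpha_h$ and $\tau_h^2$ allow one to bound the second-order term of the Gateaux expansion, to control the discrepancy between $F_n(X_{i:n})=i/n$ and $F_X(X_{i:n})$ (a delicate point near the extreme order statistics), and to tame the tails of the integrals at $0$ and $\infty$. A secondary subtlety, already flagged above, is that the concomitants are not independent of the order statistics, so every manipulation involving the $Y$'s must be carried out conditionally on the whole $X$-sample; once that conditioning is in force the relevant array is genuinely independent and the usual limit theorems apply.
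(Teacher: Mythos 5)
The paper does not actually prove this theorem: it is stated verbatim as ``a general result due to Sandstrom (1987)'' and is simply cited, so there is no in-paper argument to compare against. Your sketch is a faithful reconstruction of the standard Yang (1981)/Sandstrom (1987) route, and its skeleton is sound: the split $T(F_n)=S_n+C_n$ with $S_n=\frac1n\sum_i J(i/n)\alpha_h(X_{i:n})$ an $L$-statistic and $C_n$ the concomitant fluctuation; the key structural fact that, given the $X$-sample, the $Y_{[i:n]}$ are independent with $h(X_{i:n},Y_{[i:n]})$ having conditional mean $\alpha_h(X_{i:n})$ and variance $\tau_h^2(X_{i:n})$; the identification of the influence function of $S_n$ whose second moment is $\sigma_{11}^2$ in (\ref{2.18}); the conditional variance of $C_n$ converging to $\sigma_{22}^2$ in (\ref{2.19}); and the exact vanishing of the cross-covariance. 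These computations all check out against the stated $\sigma^2=\sigma_{11}^2+\sigma_{22}^2$.

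Two points would need to be filled in before this is a proof rather than a plan. First, zero covariance between $S_n$ and $C_n$ does not by itself give joint asymptotic normality of the pair; the clean way to finish is the conditional characteristic function argument, writing $E\bigl[e^{it\sqrt n(S_n+C_n)}\bigr]=E\bigl[e^{it\sqrt nS_n}\,E(e^{it\sqrt nC_n}\mid X_1,\dots,X_n)\bigr]$ and showing the inner conditional characteristic function converges in probability to the constant $e^{-t^2\sigma_{22}^2/2}$ (via a conditional Lindeberg condition), after which dominated convergence gives the product limit. You gesture at this by insisting on conditioning throughout, but the step deserves to be explicit since it is where asymptotic independence actually comes from. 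Second, the remainder control ($\sqrt n\,R_n=o_p(1)$ and replacement of $E\,T(F_n)$ by $T(F)$) is correctly identified as the technical core and correctly tied to the hypotheses on $J$, $\alpha_h$ and $\tau_h^2$, but it is only named, not executed; as written the proposal defers exactly the part of Sandstrom's paper that is nontrivial. Given that the theorem is an imported classical result, this is an acceptable outline, but it is an outline.
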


We shall now use this theorem to derive the asymptotic distribution of the estimators defined above.

\begin{corollary} As $n\rightarrow \infty$,
the distribution of $\sqrt{n}(\widehat{N}- N)/\sigma_3$ converges to standard normal distribution, where $\sigma_3$ is given by
\begin{eqnarray*}
\sigma_3^2&=&v^2\int_{0}^{\infty}\int_{0}^{\infty}\big[min\{F_X(x),F_X(z)\}-F_X(x)F_X(z)\big]\\&\quad&\quad
(F^{v-1}_X(x))-\bar{F}^{v-1}_X(x))(F^{v-1}_X(z))-\bar{F}^{v-1}_X(z))dxdz.
\end{eqnarray*}
\end{corollary}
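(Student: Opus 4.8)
The plan is to apply Theorem \ref{th4.1} directly to the specific estimator $\widehat N$, which is exactly of the form $T(F_n)$ with the identifications already made in the text: $J(u) = u^{v-1} - (1-u)^{v-1}$ and $h(x,y) = vx$. So the task reduces to computing the quantities $\alpha_h$, $\tau_h^2$, $\sigma_{11}^2$ and $\sigma_{22}^2$ for this particular kernel and checking the regularity hypotheses.

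First I would observe that since $h(x,y) = vx$ depends only on $x$ and not on $y$, the conditional expectation is $\alpha_h(x) = E(h(X,Y)\mid X=x) = vx$, and crucially the conditional variance vanishes: $\tau_h^2(x) = V(h(X,Y)\mid X=x) = 0$. Consequently $\sigma_{22}^2 = \int_0^\infty J^2(F_X(x))\,\tau_h^2(x)\,dF_X(x) = 0$, so $\sigma^2 = \sigma_{11}^2$. Next I would substitute into \eqref{2.18}: since $\alpha_h(x) = vx$ we have $d\alpha_h(x) = v\,dx$, so each of the two $d\alpha_h$ factors contributes a factor $v$, giving the $v^2$ in front; the kernel $[\min\{F_X(x),F_X(z)\} - F_X(x)F_X(z)]$ is copied verbatim, and $J(F_X(x)) = F_X^{v-1}(x) - \bar F_X^{v-1}(x)$ supplies the remaining factor, and likewise for $z$. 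This yields precisely the stated formula for $\sigma_2^2$. Then the corollary follows from Theorem \ref{th4.1} with $\sigma = \sigma_2$.

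To close the argument I would verify the hypotheses of Theorem \ref{th4.1}: $\alpha_h(x) = vx$ is continuous (hence right continuous); $J(u) = u^{v-1} - (1-u)^{v-1}$ is a polynomial, hence bounded on $[0,1]$ and differentiable; and $\alpha_h(x) = vx$ and $\tau_h^2(x) = 0$ are finite (the former under the standing assumption $\mu = E(X) < \infty$, which guarantees the integral defining $\sigma_2^2$ converges as well). All hypotheses hold, so the conclusion of Theorem \ref{th4.1} applies.

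I do not anticipate a serious obstacle here, since the heavy lifting is done by the cited Sandstrom result; the only point requiring a little care is the bookkeeping in passing from $d\alpha_h$ to $v\,dx$ and confirming that the double integral defining $\sigma_2^2$ is finite, which follows because $\min\{F_X(x),F_X(z)\} - F_X(x)F_X(z) \le \bar F_X(\max\{x,z\})$ and $|J|\le 1$, so the integrand is dominated by something integrable whenever $\mu<\infty$. One might also remark that the same template will be reused, with $J(u) = u^{v-1} + (1-u)^{v-1}$, to obtain the analogous asymptotic normality of $\widehat D$, and then Slutsky's theorem delivers the limiting distribution of $\widetilde{GIM}(v)$.
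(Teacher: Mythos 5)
Your proposal is correct and follows essentially the same route as the paper: apply the Sandstrom-type theorem with $h(x,y)=vx$, so $\alpha_h(x)=vx$, $\tau_h^2(x)=0$ (hence $\sigma_{22}^2=0$), and $J(u)=u^{v-1}-(1-u)^{v-1}$, with $d\alpha_h(x)=v\,dx$ producing the factor $v^2$ in $\sigma_2^2$. Your additional verification of the regularity hypotheses and the integrability of the variance expression is a welcome elaboration of details the paper leaves implicit.
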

\begin{proof}
    The asymptotic normality follows from Theorem \ref{th4.1}. Note that $\alpha_h(x)=vx$, $\tau^2_h(x)=0$ and $J(x)=F^{v-1}_X(x))-\bar{F}^{v-1}_X(x)$, hence we have the variance expression given as in the Corollary.

\end{proof}

Similarly, we have the following result for $\widehat{D}$.
\begin{corollary} As $n\rightarrow \infty$,
the distribution of $\sqrt{n}(\widehat{D}- D)/\sigma_4$ converges to standard normal distribution, where $\sigma_4$ is given by
\begin{eqnarray*}
\sigma_4^2&=&v^2\int_{0}^{\infty}\int_{0}^{\infty}\big[min\{F_X(x),F_X(z)\}+F_X(x)F_X(z)\big]\\&\quad&\quad
(F^{v-1}_X(x))+\bar{F}^{v-1}_X(x))(F^{v-1}_X(z))+\bar{F}^{v-1}_X(z))dxdz.
\end{eqnarray*}
\end{corollary}

Sudheesh et al. (2022) proved that $ T(F_n)$  is a consistent estimator of $T(F)$. Accordingly, $\widehat{D}$ is a consistent estimator of $D$. Hence using Slutky's theorem, we have the following result.

\begin{corollary} The  distribution of $\sqrt{n}(\widetilde{GIM}(v)-GIM(v))$, as $ n \rightarrow \infty $, is Gaussian with mean zero and variance $v^2\sigma_*^{2}$, where $\sigma_{*}^{2}=\sigma_{3}^2/D^2$.
\end{corollary}
Using Corollary 5, we can obtain a normal-based confidence interval for $GIM(v)$. Let $\widehat{\sigma}_*$ be a consistent estimator of $\sigma_*$ and  $Z_{\alpha}$ denote the upper $\alpha$-th percentile point of a standard normal distribution. A $(1-\alpha)$ level  confidence interval for  $GIM(v)$ is given by
  \begin{equation*}
   \left(\widetilde{GIM}(v)-Z_{\alpha/2}\frac{\widetilde{\sigma}*}{\sqrt{n}},\,\,\widetilde{GIM}(v)+Z_{\alpha/2}\frac{\widehat{\sigma}_*}{\sqrt{n}}\right).\end{equation*}

\section{Jackknife empirical likelihood based confidence interval}

In the previous section, we discussed the construction of the normal based confidence interval for $GIM(v)$. As it is difficult to find a consistent estimator of the asymptotic variance, the implementation of this method is not advisable. Hence we propose a jackknife empirical likelihood (JEL) based confidence interval for $GIM(v)$ which is distribution free.  Jackknife empirical method was proposed by Jing et al (2009). This method is a combination jackknife resampling technique and the empirical likelihood method, and it is used for constructing confidence intervals and performing hypothesis tests. JEL-based inference for income inequality measures has been studied by Wang et al. (2016), Wang and Zhao (2016), Sang et al. (2019) and Wei et al. (2022).

For implementing this method, we need to generate the jackknife pseudo values. For this purpose,  we define the estimating equation  as
 \begin{equation}\label{senjack}
   S_{n}=\frac{1}{\binom{n}{v}} \sum_{P_{n,v}} h(X_i,X_j,\ldots,X_{v};{GIM}(v))=0,
 \end{equation}
 where
 \begin{eqnarray*}
     h(X_1,X_2,..,X_{v}; GIM)&=&{GIM}(v)(\max(X_1,...,X_v)+min(X_1,...,X_v))\\&&-(\max(X_1,...,X_v)-\min(X_1,...,X_v)),
 \end{eqnarray*} and $P_{n,v}$ is the set of all permutation of $v$ from the set $\{1, \cdots, n\}$.
  Now we define the jackknife pseudo values  as
  \begin{equation*}
    \widehat{V}_{k}=nS_{n}-(n-1)S_{n-1,k};\,\,k=1,2,...,n,
  \end{equation*}
  where $S_{n-1,k}$ is calculated from (\ref{senjack}) using the sample observations\\$X_1,X_2,...,X_{k-1},X_{k+1},...,X_{n}$.
  Thus we have
  \begin{equation*}
    \frac{1}{n}\sum_{k=1}^{n}\widehat{V}_{k}=S_{n}.
  \end{equation*}
 Let ${p_k}=, k=1,2,\ldots, n$ be the probability associated with each  $\widehat{v}_k$. Define JEL for GIM measure as
  \begin{equation*}
 JEL(GIM)=\sup_{\bf p} \big(\prod_{i=1}^{n}{p_i};\,\, \sum_{i=1}^{n}{p_i}=1;\,\,\sum_{i=1}^{n}{p_i \widehat{V}_{k}}=0\big).
\end{equation*}
Note that $\prod_{i=1}^{n}{p_i}$ is maximised subject to the condition  $\sum_{i=1}^{n}{p_i}=1$ at $p_i = 1/n$.
Hence, using the Lagrange multiplier method, we obtain the jackknife empirical log-likelihood ratio as
\begin{equation*}
  \log R(GIM)=-\sum_{k=1}^{n}\log\big(1+\lambda\widehat{V}_{k}\big),
\end{equation*}
where $\lambda$ is the solution of
\begin{equation}\label{lambda2}
  \frac{1}{n}\sum_{k=1}^{n}{\frac{\widehat{V}_{k}}{1+\lambda \widehat{V}_{k}}}=0,
\end{equation}
    provided
\begin{equation*}
  \min_{{1\le k\le n}}\widehat{V}_{k}<{GIM}(v)<  \max_{1\le k\le n}\widehat{V}_{k}.
\end{equation*}
Next we obtain the asymptotic distribution of the jackknife empirical log-likelihood ratio as an analog of Wilk's theorem.
\begin{theorem}
 Denote $g(x)=E\left(\psi(X_1,X_2,\ldots,X_{v};GIM(v))|X_1=x\right)$
 and $\sigma_{g}^{2}=Var(g(X))$.
Suppose that $E\left(\psi(X_1,X_2,\ldots,X_{v};GIM(v))\right)<\infty$  and $\sigma_{g}^{2}>0$. Then as $n\rightarrow \infty$, $-2\log R(GIM)$ converges in distribution to a $\chi^2$ random variable  with one degree of freedom.
\end{theorem}
Using Theorem 5, we can construct  a $100(1-\alpha) \%$ JEL based confidence interval for $GIM$ as
\begin{equation*}
  \left\{GIM:-2\log R(GIM)\leq \chi^2_{1-\alpha}(1)\right\},
\end{equation*}where $\chi^2_{(1-\alpha)}(1)$ is the $(1-\alpha)-$th  percentile point of a $\chi^2$  distribution with one degree of freedom.

\section{Simulation Study}
In this section, we carried out a Monte Carlo Simulation study to evaluate the finite sample performance of the proposed estimators of $GIM(v)$. The simulation is done using  R package and repeated ten thousand times. We carried out the simulation study for $v=2$ and $v=3$ with three different distributional assumptions of the $X$- exponential, Pareto and lognormal. The results from the simulation study for these distributions are given in Table \ref{tab:exp},\ref{tab:pareto}, \ref{tab:lnorm}, respectively. For each case, the bias and mean square deviation (MSE) is calculated to evaluate the performance of the proposed estimators.

\begin{table}[h]
\caption{Bias and MSE: Exponential distribution}
    \centering
    \begin{tabular} {| c | c | c | c | c | c |}
    \hline
  &  & \multicolumn{2} {|c|} {$\widehat{GIM}(v)$} & \multicolumn{2} {|c|} {$\widetilde{GIM}(v)$} \\
 \hline
\multirow{7}{*}{ $v=2$} & $n$ & Bias  & MSE & Bias & MSE\\
\hline

   &20 & 0.000 &0.004 &0.025 & 0.004 \\
    &40 & 0.000 & 0.002&0.014 &0.002\\
    &60 & 0.000 & 0.001& 0.007&0.001\\
    &80 & 0.000 & 0.001&0.005 &0.000\\
    &100 & 0.000 & 0.000& 0.004 &0.000\\
    &200 & 0.000 & 0.000& 0.002 &0.000\\
    \hline
    \multirow{7}{*}{$v=3$} & $n$ & Bias  & MSE & Bias & MSE\\
\hline
   &20 & -0.009 & 0.005 &0.018 & 0.004 \\
    &40 & -0.003 & 0.002&0.008 &0.002\\
    &60 & -0.001 & 0.001& 0.004&0.002\\
    &80 & 0.000 & 0.001&0.003 &0.001\\
    &100 & 0.000 & 0.000& 0.004 &0.000\\
    &200 & 0.000 & 0.000& 0.001 &0.000\\
    \hline
    \end{tabular}

    \label{tab:exp}
\end{table}

\begin{table}[]
\caption{Bias and MSE: Pareto distribution}
    \centering
    \begin{tabular} {| c | c | c | c | c | c |}
    \hline
  &  & \multicolumn{2} {|c|} {$\widehat{GIM}(v)$} & \multicolumn{2} {|c|} {$\widetilde{GIM}(v)$} \\
 \hline
\multirow{7}{*}{ $v=2$} & $n$ & Bias  & MSE & Bias & MSE\\
\hline

   &20 & -0.032 &0.012 &0.008 & 0.011 \\
    &40 & -0.021 & 0.008&-0.003 &0.007\\
    &60 & -0.014 & 0.006& -0.003&0.005\\
    &80 & -0.010 & 0.005&-0.001 &0.005\\
    &100 & -0.012 & 0.004&-0.001 &0.004\\
    &200 & -0.005 & 0.002& -0.004 &0.002\\
    \hline
    \multirow{7}{*}{$v=3$} & $n$ & Bias  & MSE & Bias & MSE\\
\hline
   &20 & -0.048&0.016&0.003 & 0.012  \\
    &40 &-0.025 &0.010&-0.005 & 0.008\\
    &60 & -0.020&0.008&-0.003 & 0.007\\
    &80 &-0.010 &0.006&-0.006 & 0.006\\
    &100 & -0.010&0.005& -0.002 & 0.005\\
    &200 & -0.009&0.003&-0.002 & 0.003\\
    \hline
    \end{tabular}

    \label{tab:pareto}
\end{table}

\begin{table}
\caption{Bias and MSE: Lognormal distribution}
    \centering
    \begin{tabular} {| c | c | c | c | c | c |}
    \hline
  &  & \multicolumn{2} {|c|} {$\widehat{GIM}(v)$} & \multicolumn{2} {|c|} {$\widetilde{GIM}(v)$} \\
 \hline
\multirow{7}{*}{ $v=2$} & $n$ & Bias  & MSE & Bias & MSE\\
\hline

   &20 & 0.000 &0.002 &0.033 & 0.002 \\
    &40 & 0.000 & 0.001&0.010 &0.001\\
    &60 & 0.000 & 0.001& 0.011&0.001\\
    &80 & 0.000 & 0.000&0.007 &0.001\\
    &100 & 0.000 & 0.000&0.006&0.000\\
    &200 & 0.000 & 0.000& 0.003 &0.000\\
    \hline
    \multirow{7}{*}{$v=3$} & $n$ & Bias  & MSE & Bias & MSE\\
\hline
   &20 & 0.000&0.003&0.040 & 0.004  \\
    &40 &0.000 &0.001&-0.020 & 0.001\\
    &60 & 0.000&0.001&-0.015 & 0.001\\
    &80 &0.000 &0.001&-0.010 & 0.000\\
    &100 & 0.000&0.001& -0.008 & 0.000\\
    &200 & 0.000&0.000&-0.004 & 0.000\\
    \hline
    \end{tabular}

    \label{tab:lnorm}
\end{table}

From  Tables \ref{tab:exp}, \ref{tab:pareto}, \ref{tab:lnorm}, we can see that the bias and MSE for both estimators are steadily converging to zero as the sample size increases.

Next we study the performance of JEL based confidence interval for the proposed GIM measure. We find the coverage probability when $X$ is generated from exponential, Pareto and lognormal distributions. The results of the simulation studies are reported in Tables \ref{tab:cpexp} - \ref{tab:cplnorm}. From these tables, we observe that the coverage probabilities converges to the confidence level as the sample size increases.

\begin{table}[]
\caption{Coverage probabilities of confidence interval for exponential distribution with $\lambda=1$  }
    \centering
    \begin{tabular} {cccccccccc}
 \hline
 & $n$ & $v=2$ & $v=3$& $v=4$ \\
\hline

   &20 & 0.931& 0.942 & 0.937   \\
    &40 & 0.949& 0.947 & 0.941\\
    &60 & 0.949 & 0.950& 0.949 \\
    &80 & 0.950& 0.950& 0.950\\
    &100 & 0.950&  0.950& 0.950\\
    \hline
    \end{tabular}
    \label{tab:cpexp}
\end{table}

\begin{table}[]
\caption{Coverage probabilities of confidence interval for Pareto distribution with $\alpha=2$  }
    \centering
    \begin{tabular} {cccccccccc}
 \hline
 & $n$ & $v=2$ & $v=3$& $v=4$ \\
\hline

   &20 & 0.701& 0.696 & 0.723   \\
    &40 & 0.851& 0.771 & 0.843\\
    &60 & 0.893 & 0.934& 0.917 \\
    &80 & 0.949& 0.948& 0.946\\
    &100 & 0.950&  0.950& 0.950\\
    \hline
    \end{tabular}
    \label{tab:cppar}
\end{table}

\begin{table}[]
\caption{Coverage probabilities of confidence interval for lognormal distribution with $(\mu,\sigma)=(0,1)$  }
    \centering
    \begin{tabular} {cccccccccc}
 \hline
 & $n$ & $v=2$ & $v=3$& $v=4$ \\
\hline

   &20 & 0.882& 0.919 & 0.911   \\
    &40 & 0.931& 0.941 & 0.934\\
    &60 & 0.949 & 0.950& 0.949 \\
    &80 & 0.950& 0.950& 0.950\\
    &100 & 0.950&  0.950& 0.950\\
    \hline
    \end{tabular}
    \label{tab:cplnorm}
\end{table}

It is worth investigating what is the ideal choice of $v$.  We propose an elbow method for choosing the value of $v$. In this method, we plot the values of $GIM(v)$ for different $v$'s and we find the point where the curve bends and starts to flatten, resembling an elbow which is chosen as the optimal $v$. To understand this, we make a plot to observe the behavior of $GIM(v)$ as $v$ increases. We have generated $X$ from exponential ($\lambda =1$), Pareto ($\alpha =2$0 and lognormal $((\mu,\sigma) =(0,1))$ distributions  and calculated the GIM(v) values for different values of $v$. The result is presented in Figure \ref{fig:gim}. From these plots, we recommend to use $v = 3$ for all the cases considered here.

\begin{figure}[H]
\centering
\caption{GIM(v) for different distributions}
\vspace{0.2in}
\begin{tabular}{c c }
     \includegraphics[width=6 cm]{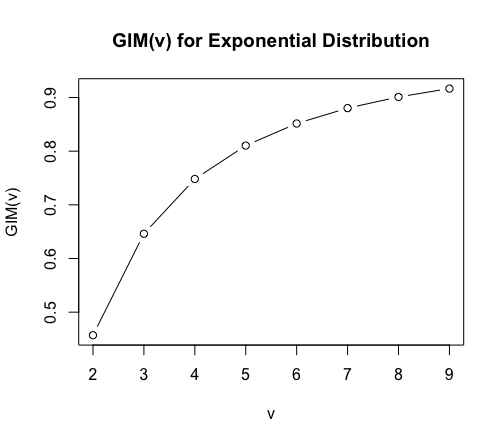}

 &\includegraphics[width=6 cm]{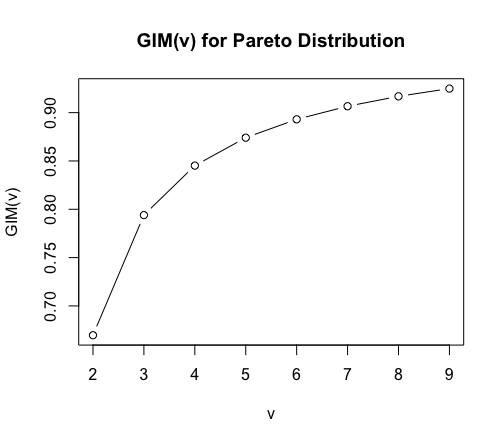}

 \\

\end{tabular}
\includegraphics[width=6 cm]{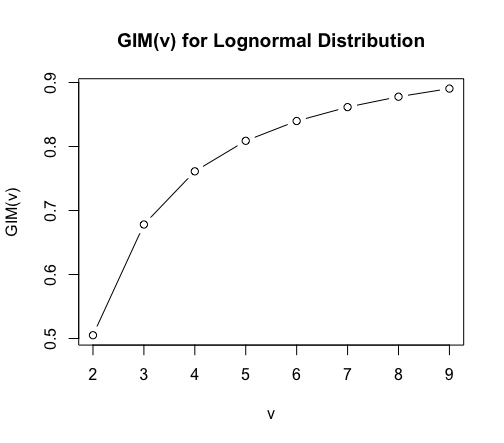}

\label{fig:gim}

\end{figure}

\section{Data Analysis}

We have selected state-by-state income data from India for the purpose of empirical illustration. We selected Kerala, Tamil Nadu, and Bihar as the three states for the study because they each have a varied amount of income disparity due to the socio-economic conditions that prevail there. The Consumer Pyramids Household Survey (CPHS) of the Centre for Monitoring Indian Economy (CMIE) is the source of the household-level income data for each state (the dataset is available at \url{https://consumerpyramidsdx.cmie.com}. It is a frequent, comprehensive survey that is conducted on a regular basis to obtain data about Indian household demographics, spending, assets, and attitudes. Every year, three waves of data are collected, each lasting four months. We used data from Wave 28, which comprises information gathered between January and April, 2023 for the analysis.

The distribution of income for each state is given in Figure \ref{fig:den}. From the distributional pattern, it can be inferred that there is an increasing pattern of income inequality with Tamil Nadu being the lowest and the Bihar being the highest. To further understand the inequality patterns in each state we have reported the descriptive statistics of the income data in Table  \ref{tab:des} and the Gini index is calculated using R-package `dineq' and is reported in Table \ref{tab:gini}. The descriptive statistics show the presence of heavy tails in samples from Kerala and Bihar compared to Tamil Nadu. Though the Gini index provides a summary of the inequality, from Figure \ref{fig:den} and Table  \ref{tab:des}, we can infer the presence of more inequality in Bihar.
 Hence, we feel that the proposed measure can effectively capture the dispersion in the income as the calculation of the $GIM(v)$  includes the extreme values in the datasets and hence captures the heavy tail behavior of the income distribution.

\begin{figure}[htbp]
\centering
\caption{Income distribution of different states}
\vspace{0.2in}
\begin{tabular}{c c }
     \includegraphics[width=6 cm]{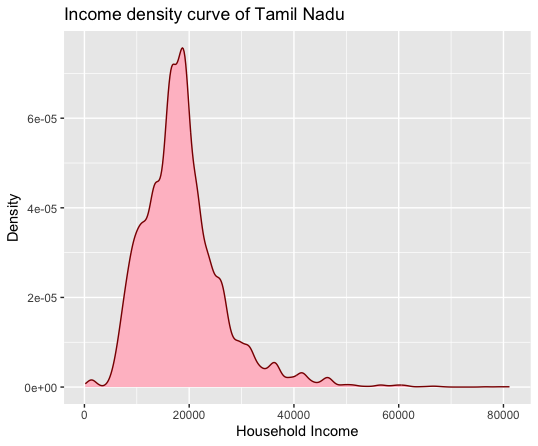}

 &\includegraphics[width=6 cm]{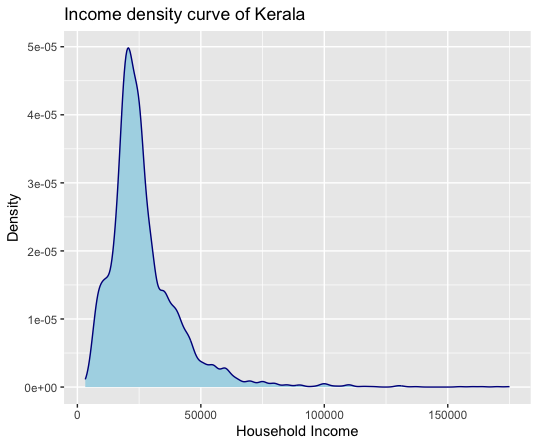}

 \\

\end{tabular}
\includegraphics[width=6 cm]{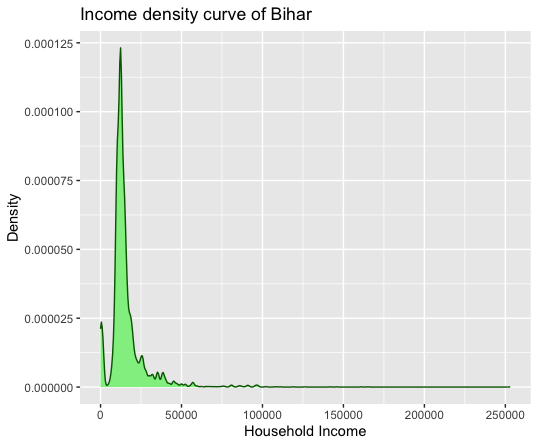}

\label{fig:den}

\end{figure}

\begin{table}[H]
    \centering
     \caption{Descriptive statistics}
    \begin{tabular}{|c|c|c|c|}
    \hline
     & Tamil Nadu   & Kerala & Bihar \\
     \hline

      $n$   & 8129 & 4310 & 7475\\
      \textit{Mean}  & 18736.31 & 26829.06 &15713.65\\

      \textit{SD} & 7792.97 & 15185.06 & 12018.63 \\
     \textit{ Min }& 165 &3200 &0\\
      \textit{Max} &81150& 175000 &253000\\
      \textit{Range} & 80985& 171800 & 253000\\
      \textit{Skewness} & 1.53 & 2.75 & 4.69 \\
     \textit{Kurtosis} & 5.19 &14.4 &43.5\\
      \hline

    \end{tabular}

    \label{tab:des}
\end{table}

\begin{table}[H]
    \centering
     \caption{The Gini index of each state in India}
    \begin{tabular}{|c|c|}
    \hline
     State    &  Gini Index\\
     \hline
    Tamil Nadu     & 0.216\\

    Kerala & 0.271\\

    Bihar& 0.310\\
    \hline
    \end{tabular}

    \label{tab:gini}
\end{table}

\begin{table}[H]
    \centering
    \caption{$GIM (v)$ values calculated for $v=2$ and $v=3$}
    \begin{tabular}{|c|c|c|}
    \hline
      State   & $GIM(2)$ & $GIM(3) $\\
     \hline
    Tamil Nadu     & 0.216&0.319\\

    Kerala & 0.271 & 0.393\\

    Bihar& 0.310 & 0.441\\
    \hline
    \end{tabular}

    \label{tab:stat_gim}
\end{table}
Table \ref{tab:stat_gim} gives the $GIM(v)$ calculated for the chosen states for $v=2 $ and $v=3$. As was previously discussed, when $v=$2, $GIM(2)$ becomes the Gini Index, and we can observe that the values generated from the real dataset also coincide.  It can be seen that like the Gini Index, the proposed measure also gives the same pattern of inequality among the chosen states. However, we can see that $GIM(3)$ values are higher than the respective $GIM(2)$. This shows that $GIM(3)$ might be capturing the dispersion much more effectively compared to $GIM(2)$.

\section{Conclusion}
There are many inequality measures available in the literature which are generalizations of the Gini mean difference or Gini index. Here we proposed a new measure 'Generalised inequality measure (GIM)' as a generalization of the Gini index.  The proposed measure captures the effect of extreme observations in the sample. We introduced two different estimators of the proposed measured and also studied the asymptotic properties of these estimators. We also developed a JEL based confidence interval for $GIM(v)$. The proposed measure was calculated for a set of real data in the study. The GIM can be modified for truncated random variables, allowing us to study the distribution of income among  affluent or poor people.








\begin{thebibliography}{xx}



\bibitem{} Carcea, M, and Serfling, R. : A Gini autocovariance function for time series modelling. {\em Journal of Time Series Analysis},  36,: 817--838  (2015).

\bibitem{} Cardin, M., Eisenberg, B., and Tibiletti, L. : Mean‐extended Gini portfolios personalized to the investor's profile. {\it Journal of Modelling in Management}, 8(1), 54--64 (2013).


\bibitem{} Davidson, R. : Reliable inference for the Gini index. {\it  Journal of  Econometrics}, {\bf  150}, 30--40 (2009).

\bibitem{} Jing, B. Y., Yuan, J. and  Zhou, W. :  Jackknife empirical likelihood. {\em Journal of the American Statistical Association}, 104, 1224--1232 (2009).



\bibitem{} Langel, M. and Tille, Y. : Variance estimation of the Gini index: revisiting a result several time published, {\it Journal of the Royal Statistical Society-Series A},  176, 521--540 (2013).


\bibitem{}  Lee, A. J. : {\em U-statistics: Theory and Practice}. Routledge, New York  (2019).






\bibitem{} Sandstrom, A. : Asymptotic normality of linear functions
of concomitants of order statistics, {\it Metrika}, 34, 129--142 (1987).

\bibitem{} Sang, Y., X. Dang, and Y. Zhao. : Jackknife empirical likelihood methods for Gini correlations
and their equality testing. {\em Journal of Statistical Planning and Inference} 199, 45--59 (2019).





\bibitem{} Wang, D., Y. Zhao, and D. W. Gilmore. :  Jackknife empirical likelihood confidence interval
for the Gini index. {\em Statistics \& Probability Letters},  110, 289--295 (2016).

\bibitem{} Wang, D., and Y. Zhao. Jackknife empirical likelihood for comparing two Gini indices.
{\em Canadian Journal of Statistics},  44, 102--119 (2016).

\bibitem{}  Sreelakshmi, N., Kattumannil, S. K. and Sen, R. : Jackknife empirical likelihood-based inference for S-Gini indices. {\em Communications in Statistics-Simulation and Computation}, 50, 1645--1661 (2021).

\bibitem{} Sudheesh, K. K., Dewan, I. and Sreelaksmi, N. : Non-parametric estimation of Gini index with right censored observations. {Statistics \& Probability Letters}, 175, 109113 (2021).

\bibitem{}  Sudheesh, K. K., Sreelakshmi, N. and Balakrishnan, N. : Non-parametric inference for Gini covariance and its variants. {\em Sankhya A}, 84, 790--807 (2022).



\bibitem{} Wei, Y., Li, Z. and  Dai, Y.  Unified smoothed jackknife empirical likelihood tests for comparing income inequality indices. {\em Statistical Papers}, 63, 1415--1475 (2022).

\bibitem{}  Xu, K. , U-statistics and their asymptotic results for some inequality and
poverty measures. {\it Econometric Reviews},  26, 567--577 (2007).


\bibitem{} Yang S. S. ,  Linear functions of concomitants of order statistics, with application to nonparametric
estimation of a regression function. {\it Journal of the American Statistical Association}, 76, 658--662 (1981).



 \bibitem{}  Yitzhaki, S.  Schechtman, E. : The properties of the extended Gini measures of variability
and inequality, {\it METRON: International Journal of Statistics}, 63, 401--433 (2005).


\bibitem{} Yitzhaki, S. and Schechtman, E. : {\it The Gini Methodology: A Primer on a Statistical Methodology}, Springer (2013).


%
%
%

\end{thebibliography}
\end{document}